\documentclass[11pt, oneside]{article}   	
\usepackage[margin=1in]{geometry}            		
\geometry{letterpaper}                   		
\usepackage{graphicx}	
\usepackage{caption}
\usepackage{subcaption}				
\usepackage{amssymb}
\usepackage{enumerate}
\usepackage{amsmath}
\usepackage{amsthm}
\usepackage{xcolor}
\usepackage{hyperref}
\hypersetup{
    linktoc=all,     
    linkcolor=blue,  
}
\usepackage[all]{xy}

\newtheorem{theorem}{Theorem}[section]
\newtheorem{lemma}[theorem]{Lemma}
\newtheorem{proposition}[theorem]{Proposition}
\newtheorem{corollary}[theorem]{Corollary}

\theoremstyle{definition}
\newtheorem{definition}[theorem]{Definition}
\theoremstyle{definition}
\newtheorem{example}[theorem]{Example}

\newtheorem{remark}{Remark}

\newcommand{\C}{\mathcal C}
\newcommand{\D}{\mathcal D}
\newcommand{\E}{\mathcal E}
\newcommand{\F}{\mathbb F}
\newcommand{\R}{\mathbb R}
\newcommand{\U}{\mathcal U} 

\title{Neural ring homomorphisms and maps between neural codes}
\author{Carina Curto \& Nora Youngs}

\begin{document}

\maketitle

\begin{abstract} 
Neural codes are binary codes that are used for information processing and representation in the brain. In previous work, we have shown how an algebraic structure, called the {\it neural ring}, can be used to efficiently encode geometric and combinatorial properties of a neural code \cite{neuralring}. In this work, we consider maps between neural codes and the associated homomorphisms of their neural rings. In order to ensure that these maps are meaningful and preserve relevant structure, we find that we need additional constraints on the ring homomorphisms. This motivates us to define {\it neural ring homomorphisms}. Our main results characterize all code maps corresponding to neural ring homomorphisms as compositions of 5 elementary code maps. As an application, we find that neural ring homomorphisms behave nicely with respect to convexity. In particular, if $\C$ and $\D$ are convex codes, the existence of a surjective code map $\C\rightarrow  \D$ with a corresponding neural ring homomorphism implies that the minimal embedding dimensions satisfy $d(\D) \leq d(\C)$.
\end{abstract}

\tableofcontents

\section{Introduction}

A major challenge of mathematical neuroscience is to determine how the brain processes and stores information. By recording the spiking from a population of neurons, we obtain insights into their coding properties.  A {\it neural code} on $n$ neurons is a subset $\C\subset\{0,1\}^n$, with each binary vector in $\C$ representing an on-off pattern of neural activity. This type of neural code is referred to in the neuroscience literature as a combinatorial neural code \cite{Bialek2008,BialekBerry} as it contains only the combinatorial information of which neurons fire together, ignoring precise spike times and firing rates. These codes can be analyzed to determine important features of the neural data, using tools from coding theory \cite{Walker2013} and topology \cite{gap, CurtoBulletin}. 

A particularly interesting kind of neural code arises when neurons have {\it receptive fields}. These neurons are selective to a particular type of stimulus; for example, place cells respond to the animal's spatial location \cite{OKeefeDostrovsky}, and orientation-tuned neurons in visual cortex respond to the orientation of an object in the visual field \cite{hubelwiesel}.  The neuron's {\it receptive field} is the specific subset of the stimulus space to which that neuron is particularly sensitive, and within which the neuron exhibits a high firing rate. If all receptive fields for a set of neurons is known, one can infer the expected neural code by considering the overlap regions formed by the receptive fields. Figure \ref{fig:receptivefields} shows an arrangement of receptive fields, and gives the corresponding neural code.

\begin{figure}[h] 
   \centering
   \includegraphics[width=3.2in]{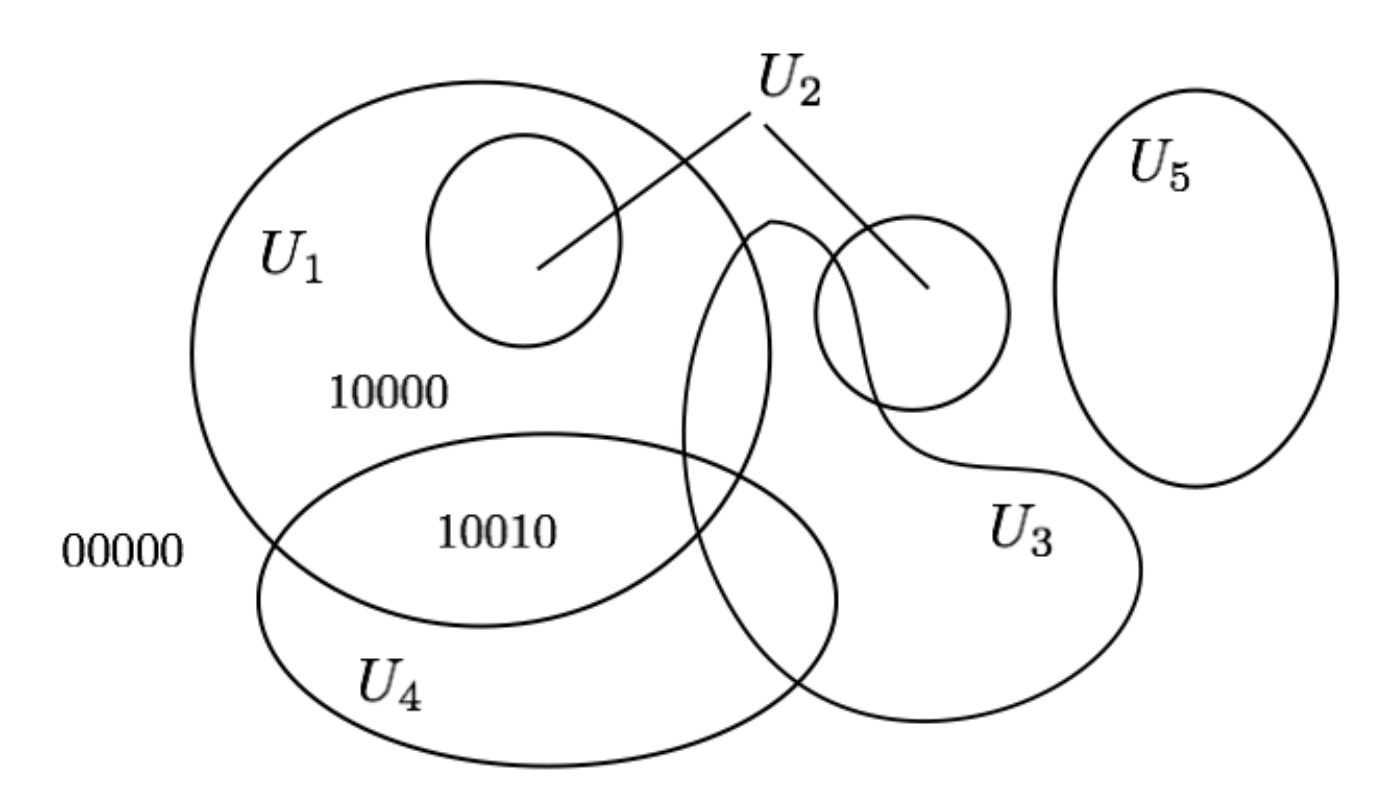} 
   \caption{An arrangement of five receptive fields $U_1,...,U_5$ in a stimulus space. Here, $U_i$ represents the receptive field of neuron $i$. The full code for the arrangement is: $\C = \{00000, 10000, 01000, 00100, 00010, 00001, 11000, 10100, 10010, 01100, 00110, 10110\}$.  }
   \label{fig:receptivefields}
\end{figure}

An arrangement of receptive fields whose regions correspond precisely to the neural code $\C$ is called a {\it realization} of $\C$. If the receptive fields can be chosen to be convex, then $\C$ is a {\it convex} neural code. Many neural codes are observed to be convex \cite{YartsevUlanovsky, CurtoBulletin}. In this case, we can leverage results from the extensive literature on arrangements of convex sets, such as Helly's theorem \cite{helly-review}, to give bounds on the dimension of the space of stimuli (see \cite{neuralring} for some examples). Note that the code in Figure \ref{fig:receptivefields} is convex, even though the realization depicted there is not; it is easy to see that $U_3$ can be redrawn as a convex set without altering the code.

 In previous work \cite{neuralring}, we introduced the neural ideal and the corresponding neural ring, algebraic objects associated to a neural code that capture its combinatorial properties. Thus far, work involving the neural ring has been primarily concerned with using the algebraic framework to extract structural information about the code \cite{neuralring, MRCalgsigs} and to determine which codes have convex realizations \cite{MRCpaper}. However, a neural code $\C$ is not an isolated object, but rather a member of a family of codes.  We define a {\it code map} from a code $\C$ to another code $\D$  to be any well-defined function $q:\C\rightarrow \D$.  A code map may preserve important structural properties of a code, or it may completely ignore them and just send codewords to codewords in an arbitrary manner. We are interested in a set of `nice' code maps that reflect meaningful relationships between the corresponding neural codes.  Our primary motivating examples of `nice' maps are those which leave the structure of a code essentially intact:
 
\begin{enumerate}
\item {\bf Permutation:} If $\C$ and $\D$ are identical codes up to a re-ordering of the neurons, then the permutation map $q: \C \rightarrow \D$ is `nice,' as it perfectly preserves the combinatorial structure.
\item {\bf Adding or removing trivial neurons:} A code $\C$ can be trivially changed by appending an extra neuron that has uniform behavior in all codewords -- i.e., always silent or always firing. Similarly, a code that has a neuron which is always ``on" or always ``off" is structurally equivalent to the code obtained by removing these trivial neurons, and the corresponding maps are `nice.'
\end{enumerate}

One way to obtain a code with trivial neurons is via localization. For example, consider the code in Figure \ref{fig:receptivefields}, restricted to the codewords whose regions are all contained inside $U_1$.  This code has five codewords: $\C' = \{10000, 11000, 10100, 10010, 10110\}$.  There is a natural map $q:\C' \rightarrow \D$ that drops neurons 1 and 5, which are both trivial, to obtain $\D = \{000, 100, 010, 001, 011\}$, which is structurally equivalent to $\C'$. 
Not all code maps respect the structure of the corresponding codes, however. For example, there is no guarantee that an arbitrary code map $\C' \rightarrow \D$ will reflect the fact that these codes are structurally equivalent.

%
%


In this article, we consider how maps between neural codes relate to neural rings, as first defined in \cite{neuralring}. Our main questions are, simply:\\

\noindent{\bf Questions.} What types of maps between neural rings should be considered `nice'? How should we define neural ring homomorphisms? What other code maps correspond to nice maps between the associated neural rings?\\

These questions are analogous to studying the relationship between maps on algebraic varieties and their associated rings \cite{cox-little-oshea}.  However, as we will see in the next section, the standard notions of ring homomorphism and isomorphism are much too weak to capture any meaningful structure in the related codes.
Recent work \cite{MoAmzi} considered which ring homomorphisms preserve neural ideals as a set, and described corresponding transformations to codes through that lens.  In this article, we will define a special class of maps, called {\it neural ring homomorphisms}, that capture the structure of the nice code maps described above, and also guide us to discover additional code maps which should be considered `nice.' Our main result, Theorem~\ref{thm:mainthm}, characterizes all code maps that correspond neural ring homomorphisms and isomorphisms as compositions of five elementary code maps (including the two `nice' types above). As an application, Theorem~\ref{thm:convexity} shows that any surjective code map with a corresponding neural ring homomorphism preserves convexity and can only lower the minimal embedding dimension.
 
The organization of this paper is as follows. In Section 2, we review the neural ring of a code and describe the relevant pullback map, which gives a correspondence between code maps and ring homomorphisms. This allows us to see why the usual ring homomorphisms between neural rings are insufficiently restrictive. In Section \ref{sec:nrhoms} we define {\it neural ring homomorphisms}, a special class of maps that preserve code structure, and state Theorem~\ref{thm:mainthm}.  In Section 3.1 we take a closer look at the new elementary code maps that emerged in Theorem~\ref{thm:mainthm}, and  we prove the theorem. Finally, in Section \ref{sec:convexity}, we state and prove Theorem \ref{thm:convexity}, showing that surjective code maps corresponding to neural ring homomorphisms are particularly well-behaved with respect to convexity.

\section{Neural rings and the pullback map}\label{sec:pullback}

First, we briefly review the definition of a neural code and its associated neural ring, as previously defined in \cite{neuralring}. We then present the pullback map, which naturally relates maps between codes to homomorphisms of neural rings.

\begin{definition} A {\it neural code} on $n$ neurons is a set of binary firing patterns of length $n$.  Given neural codes $\C\subset\{0,1\}^n$ and $\D\subset\{0,1\}^m$, on $n$ and $m$ neurons, a {\it code map} is any function $q:\C\rightarrow \D$ sending each codeword $c\in \C$ to another codeword $q(c) \in \D$. 

For any neural code $\C\subset \{0,1\}^n$, we define the associated ideal $I_\C\subset\F_2[x_1,...,x_n]$ as follows: $$I_\C \stackrel{\text{def}}{=} \{f\in \F_2[x_1,...,x_n] \ | \ f(c) = 0 \ \text{ for all } \ c\in \C\}.$$ 
The {\it neural ring} $R_\C$ is then defined to be $R_\C = \F_2[x_1,...,x_n]/I_\C$. 
\end{definition} 
 Note that the neural ring $R_\C$ is precisely the ring of functions $\C\rightarrow \{0,1\}$, denoted $\F_2^\C$.  Since the ideal $I_\C$ consists of polynomials that vanish on $\C$, we can make use of the ideal-variety correspondence to obtain an immediate relationship between code maps and ring homomorphisms by using the pullback map. Given a code map $q:\C\rightarrow \D$, each $f\in R_\D$ is a function $f:\D\rightarrow \{0,1\}$, and therefore we may ``pull back" $f$ by $q$ to a function $f\circ q:\C\rightarrow \{0,1\}$, which is an element of $R_\C$.  Hence, for any $q:\C\rightarrow \D$, we may define the pullback map $q^*:R_\D\rightarrow R_\C$, where $q^*(f) = f\circ q$, as illustrated below:
 \[
 \xymatrix{ \C \ar[dr]_{q^*f = f\circ q} \ar[r]^q & \D\ar[d]^f\\  & \{0,1\}}
 \]

It is easy to check that for any code map $q:\C\rightarrow \D$, the pullback $q^*:R_\D\rightarrow R_\C$ is a ring homomorphism. In fact, the pullback provides a bijection between code maps and ring homomorphisms, as the following proposition states.

\begin{proposition}\label{thm:bijection}
There is a 1-1 correspondence between code maps $q:\C\rightarrow \D$ and ring homomorphisms $\phi:R_\D\rightarrow R_\C$, given by the pullback map.  That is, given a code map $q:\C\rightarrow \D$, its pullback $q^*:R_\D\rightarrow R_\C$ is a ring homomorphism; conversely, given a ring homomorphism $\phi:R_\D\rightarrow R_\C,$ there is a unique code map $q_\phi:\C\rightarrow \D$ such that $q_\phi^* = \phi$. 
\end{proposition}

Proposition \ref{thm:bijection} is a special case of \cite[Proposition 8, p. 234]{cox-little-oshea}. We will include our own proof in Section \ref{sec:pullbackpfs} in order to show constructively how to obtain $q_\phi$ from $\phi$.
Unfortunately, Proposition \ref{thm:bijection} also makes it clear that ring homomorphisms $R_\D \rightarrow R_\C$ need not preserve any structure of the associated codes, as {\it any} code map has a corresponding ring homomorphism. 
The next proposition tells us that even ring {\it isomorphisms} are quite weak: any pair of codes with the same number of codewords admits an isomorphism between the corresponding neural rings.

\begin{proposition}\label{prop:iso} A ring homomorphism $\phi:R_\D\rightarrow R_\C$ is an isomorphism if and only if the corresponding code map $q_\phi:\C\rightarrow \D$ is a bijection. 
\end{proposition}

 Propositions~\ref{thm:bijection} and \ref{prop:iso} highlight the main difficulty with using ring homomorphism and isomorphism alone: the neural rings are rings of functions from $\C$ to $\{0,1\}$, and the abstract structure of such a ring depends solely on the number of codewords, $|\C|$. Considering such rings abstractly, independent of their presentation, reflects no additional structure -- not even the code length (or number of neurons, $n$) matters.  In particular, we cannot track the behavior of the variables $x_i$ that represent individual neurons. This raises the question: what algebraic constraints can be put on homomorphisms between neural rings in order to capture a meaningfully restricted class of code maps?

\subsection{The pullback correspondence: a closer look.}\label{sec:pullbackpfs}

 Before moving on to defining a more restricted class of homomorphisms, we introduce some notation to take a closer look at neural rings, and how the correspondence between code maps and homomorphisms occurs. Using this, we provide concrete and elementary proofs of Propositions~\ref{thm:bijection} and \ref{prop:iso}.
 
   Elements of neural rings may be denoted in different ways. First, they can be written as polynomials, where it is understood that the polynomial is a representative of its equivalence class mod $I_\C$. Alternatively, using the vector space structure, an element of $R_\C$ can be written as a function $\C\rightarrow \{0,1\}$ defined completely by the codewords that support it.  We will make use of the latter idea frequently, so it is helpful to identify a canonical basis of characteristic functions $\{\rho_c\,|\,c\in \C\}$, where $$\rho_c(v) = \left \{\begin{array}{ll} 1 &\text{if}\;\; v=c , \\ 0 & \text{otherwise.}\end{array}\right.$$  In polynomial notation, 
   $$\rho_c=\prod_{c_i = 1} x_i \prod_{c_j = 0} (1-x_j),$$ 
where $c_i$ represents the $i$th component of codeword $c$.  The characteristic functions $\rho_c$ form a basis for $R_\C$ as an $\F_2$-vector space, and they have several useful properties:
\begin{itemize}
\item Each element $f$ of $R_\C$ can be represented as the formal sum of basis elements for the codewords in its support:  $\displaystyle f=\sum_{\{c \in \C \mid f(c) = 1\}} \rho_c$.
\item In particular, we can write $x_i = \displaystyle\sum_{\{c\in \C\,\mid \,c_i=1\}}\rho_c$. So, if $c_i=c_j$ for all $c \in \C$, then $x_i = x_j$.  Likewise, if $c_i=1$ for all $c\in \C$, we have $x_i=1$.
\item  The product of two basis elements is 0 unless they are identical: $\rho_c\rho_d = \left\{\begin{array}{ll} \rho_c  &\text{if}\;\; c=d ,\\ 0 & \text{ otherwise }\end{array}\right.$.
\item If $1_\C$ is the identity of $R_\C$, then $\displaystyle 1_\C = \sum_{c\in \C} \rho_c$.
\end{itemize}


Once we have a homomorphism $\phi:R_\D\rightarrow R_\C$, we necessarily have a map which sends basis elements of $R_\D$ to sums of basis elements in $R_\C$. We will now show how this illustrates the corresponding code map. First, a technical lemma.
 
 \begin{lemma}\label{lem:technical} For any ring homomorphism $\phi:R_\D \rightarrow R_\C$, and any element $c\in\C$, there is a unique $d\in \D$ such that $\phi(\rho_{d})(c) = 1$.
\end{lemma}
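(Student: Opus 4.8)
The plan is to use the ring-homomorphism structure of $\phi$ together with the three properties of the characteristic basis listed above. The key observation is that the elements $\rho_d$ for $d \in \D$ form a complete system of orthogonal idempotents in $R_\D$: by property 2, $\rho_d^2 = \rho_d$ and $\rho_d \rho_{d'} = 0$ for $d \neq d'$, and by property 3, $\sum_{d \in \D} \rho_d = 1_\D$. Since $\phi$ is a ring homomorphism, it preserves multiplication and the identity, so the images $\phi(\rho_d)$ are again orthogonal idempotents in $R_\C$ summing to $1_\C = \sum_{c \in \C} \rho_c$.

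First I would fix $c \in \C$ and evaluate everything at $c$. Applying the evaluation-at-$c$ map (a ring homomorphism $R_\C \to \{0,1\}$) to the identity $\sum_{d \in \D} \phi(\rho_d) = 1_\C$ gives $\sum_{d \in \D} \phi(\rho_d)(c) = 1$ in $\F_2$. This already shows the number of $d \in \D$ with $\phi(\rho_d)(c) = 1$ is \emph{odd}, in particular at least one such $d$ exists. Next I would rule out the possibility of two or more: if $\phi(\rho_d)(c) = 1$ and $\phi(\rho_{d'})(c) = 1$ for distinct $d, d'$, then evaluating the orthogonality relation $\phi(\rho_d)\phi(\rho_{d'}) = \phi(\rho_d \rho_{d'}) = \phi(0) = 0$ at $c$ yields $\phi(\rho_d)(c)\,\phi(\rho_{d'})(c) = 0$, i.e. $1 \cdot 1 = 0$ in $\F_2$, a contradiction. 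Hence exactly one $d \in \D$ has $\phi(\rho_d)(c) = 1$, which is the uniqueness claim.

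I do not anticipate a serious obstacle here; the only point requiring a little care is making sure that evaluation at a fixed codeword $c$ is legitimately a ring homomorphism $R_\C \to \F_2$ (so that it commutes with the sums and products above), which is immediate since elements of $R_\C$ are genuine functions $\C \to \{0,1\}$ and pointwise evaluation respects the ring operations. One should also note that $\phi(1_\D) = 1_\C$ uses that $\phi$ is a homomorphism of rings \emph{with unit} — if the paper's convention allowed non-unital homomorphisms one would instead need $\phi$ to be surjective or the pullback of a code map; but since $\phi$ here is (implicitly) unital, this is fine. The whole argument is essentially the statement that a unital ring map sends a complete orthogonal idempotent decomposition to one, read off one point at a time.
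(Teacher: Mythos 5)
Your proposal is correct and follows essentially the same route as the paper: existence from $\phi(1_\D)=1_\C=\sum_{d\in\D}\phi(\rho_d)$ evaluated at $c$, and uniqueness from the orthogonality $\phi(\rho_d)\phi(\rho_{d'})=\phi(\rho_d\rho_{d'})=\phi(0)=0$ evaluated at $c$. Your framing in terms of complete orthogonal idempotents, and your remark on unitality, are accurate refinements of the same argument rather than a different one.
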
 
 
 \begin{proof}
To prove existence, note that $\sum_{c\in \C} \rho_c = 1_{\C} = \phi(1_{\D}) = \phi(\sum_{d\in \D} \rho_d)  = \sum_{d\in \D} \phi(\rho_d)$. For each $c\in \C$, $1 = \rho_c(c) =  \left(\sum_{c'\in \C} \rho_{c'}\right)(c) = \left(\sum_{d\in \D} \phi(\rho_d)\right)(c)$, and thus 
$\phi(\rho_d)(c)=1$ for at least one $d\in \D$. 
To prove uniqueness, suppose there exist distinct $d,d'\in \D$ such that $\phi(\rho_d)(c) = \phi(\rho_{d'})(c) = 1$.  Then as $\phi$ is a ring homomorphism, we would have $1= (\phi(\rho_d)\phi(\rho_{d'}))(c) = \phi(\rho_d\rho_{d'})(c) = \phi(0)(c) = 0$, but this is a contradiction.  Thus such a $d$ must be unique. 
\end{proof}

 This result allows us to describe the unique code map corresponding to any ring homomorphism.

 \begin{definition}
 Given a ring homomorphism $\phi:R_\D\rightarrow R_\C$, we define the associated code map $q_\phi:\C\rightarrow \D$ as follows: 
  $$q_\phi(c) = d_c$$ where $d_c$ is the unique element of $\D$ such that $\phi(\rho_{d_c})(c) =1$, guaranteed by Lemma \ref{lem:technical}.  
  \end{definition}
  
Using this definition, we are able to prove Proposition \ref{thm:bijection}.

  \begin{proof}[Proof of Proposition \ref{thm:bijection}] It is easy to check that the pullback $q^*$ is a ring homomorphism; we now prove that any homomorphism can be obtained as the pullback of a code map.  Given a ring homomorphism $\phi:R_\D \rightarrow R_\C$, define $q_\phi$ as above. We must show that the $q_\phi^* = \phi$, and moreover that $q_\phi$ is the only code map with this property.  

The fact that  $q_\phi^* = \phi$ holds essentially by construction: let $f\in R_\D$, so $f=\sum_{f(d) = 1} \rho_d$.  Then, for any $c\in \C$, 
$$q_\phi^*(f)(c) = f(q_\phi(c)) = \sum_{f(d) = 1} \rho_d(q_\phi(c)) = \sum_{f(d)=1}\rho_d(d_c) = \left\{\begin{array}{ll} 1 & \text{ if } f(d_c) = 1\\ 0 & \text{ if }f(d_c) = 0 \end{array}\right.$$
whereas, remembering from above that there is exactly one $d\in \D$ such that $\phi(\rho_d)(c)=1$ and that this $d$ may or may not be in the support of $f$, we have

$$\phi(f)(c) = \sum_{f(d)=1} \phi(\rho_d)(c)=  \left\{\begin{array}{ll} 1 & \text{ if }d_c\in f^{-1}(1)  \\ 0 & \text{ if } d_c\notin f^{-1}(1) \end{array} \right. = \left\{\begin{array}{ll} 1 & \text{ if }f(d_c) = 1 \\ 0 & \text{ if }f(d_c) = 0 \end{array} \right. . $$
Thus, $\phi = q_\phi^*$.  

Finally, to see that $q_\phi$ is the only code map with this property, suppose we have a different map $q\neq q_\phi$.  Then there is some $c\in \C$ with $q(c) \neq q_\phi(c)$; let $d_c = q_\phi(c)$, so $q(c)\neq d_c$.  Then $\phi(\rho_{d_c})(c) = 1$ by definition, but $q^*(\rho_{d_c})(c) = \rho_{d_c}(q(c)) =0$ as $q(c)\neq d_c$.  So $q^*$ does not agree with $\phi$ and hence $\phi$ is not the pullback of $q$, so $q_\phi$ is the unique code map with pullback $\phi$.
\end{proof}

The following example illustrates the connection between a homomorphism $\phi$ and the corresponding code map $q_\phi$.

\begin{example}   Let $\C = \{110, 111, 010, 001\}$ and $\D = \{00, 10, 11\}$.  Let $\phi:R_\D\rightarrow R_\C$ be defined by $\phi(\rho_{11} )= \rho_{110} + \rho_{111} + \rho_{010}$, $\phi(\rho_{00}) = \rho_{001}$, and $\phi(\rho_{10}) = 0$.   Then the corresponding code map $q_\phi$ will have $q_\phi(110) = q_\phi(111) = q_\phi(010) = 11$, and $q_\phi(001) = 00$. Note that there is no element $c\in \C$ with $q_\phi(c) = 10$ so $q_\phi$ is not surjective.
\end{example}

Finally, we provide a proof of Proposition \ref{prop:iso}. 

\begin{proof}[Proof of Proposition \ref{prop:iso}]
Note that $R_\C\cong \F_2^{|\C|}$ and $R_\D \cong \F_2^{|\D|}$, and $\F_2^{|\C|} \cong \F_2^{|\D|}$ if and only if $|\C| = |\D|$.   Suppose $\phi$ is an isomorphism; then we must have $|\C| = |\D|$.  If $q_\phi$ is not injective, then there is some $d\in \D$ such that $\phi(\rho_d)(c) = 0$ for all $c\in \C$.  But then $\phi(\rho_d) = 0$, which is a contradiction since $\phi$ is an isomorphism so $\phi^{-1}(0) = \{0\}$.  Thus $q_\phi$ is injective, and since $|\C| = |\D|$, this means $q_\phi$ is a bijection.

On the other hand, suppose $q_\phi:\C\rightarrow \D$ is a bijection. Then $|\C| = |\D|$, so $R_\C\cong R_\D$, and as both are finite, $|R_\C|= |R_\D|$.  Consider an arbitrary element $f\in R_\C$.   For each $c\in f^{-1}(1)$, there is a unique $d\in \D$ so $\phi(\rho_d) = c$; furthermore as $q_\phi$ is a bijection, all these $d$ are distinct.  Then $$\phi\big(\sum_{\substack{d=q_\phi(c),\\ c\in f^{-1}(1)}} \rho_d\big) = \sum_{\substack{d=q_\phi(c)\\c\in f^{-1}(1)}} \phi(\rho_d) = \sum_{c\in f^{-1}(1)} \rho_c = f. $$ Hence $\phi$ is surjective, and since $|R_\C| = |R_\D|$, $\phi$ is also bijective and hence an isomorphism.
\end{proof}

\section{Neural ring homomorphisms}\label{sec:nrhoms}

In order to define a restricted class of ring homomorphisms that preserve certain structural similarities of codes, we consider how our motivating maps (permutation and adding or removing trivial neurons) preserve structure. In each case, note that the code maps act by preserving the activity of each neuron: we do not combine the activity of neurons to make new ones, or create new neurons that differ in a nontrivial way from those we already have. Following this idea, we restrict to a class of maps that respect the elements of the neural ring corresponding to individual neurons: the variables $x_i$. Here we use the standard notation $[n]$ to denote the set $\{1,\ldots,n\}$.

\begin{definition} Let $\C\subset\{0,1\}^n$ and $\D\subset \{0,1\}^m$ be neural codes, and let $R_\C = \F_2[y_1,...,y_n]/I_\C$ and $R_\D = \F_2[x_1,...,x_m]/I_\D$ be the corresponding neural rings.  A ring homomorphism $\phi:R_\D\rightarrow R_\C$ is a {\it neural ring homomorphism} if  $\phi(x_j)\in\{ \{y_i\,|\, i\in[n]\}, 0,1\}$ for all $j\in [m]$.  
We say that a neural ring homomorphism $\phi$ is a {\it neural ring isomorphism} if it is a ring isomorphism and its inverse is also a neural ring homomorphism. 
\end{definition}

It is important to remember that when we refer to the `variables' of $R_\D$, we actually mean the {\it equivalence class} of the variables under the quotient ring structure. Thus, it is possible in some cases to have $x_i = x_j$, or $x_i=0$, depending on whether these variables give the same function on all codewords. We now provide some examples to illustrate neural ring homomorphisms.

\begin{example}\label{ex:nrhoms}  Here we consider three different code maps: one that corresponds to a neural ring isomorphism, one that corresponds to a neural ring homomorphism but not to a neural ring isomorphism, and one that does not correspond to a neural ring homomorphism at all.

\begin{enumerate}

\item Let $\D = \{0000, 1000, 0001, 1001, 0010, 1010, 0011\}$, and let \\ $\C=
\{0000,0001, 0010, 0011, 0100, 0101, 0110\}$.  Define $\phi:R_\D\rightarrow R_\C$ as follows: \[
\begin{array}{cc}
\phi(\rho_{0000}) = \rho_{0000}  & \phi(\rho_{1000}) = \rho_{0001}\\
\phi(\rho_{0001}) = \rho_{0010} & \phi(\rho_{1001}) = \rho_{0011} \\ 
\phi(\rho_{0010}) = \rho_{0100} & \phi(\rho_{1010}) = \rho_{0101}\\ 
\phi(\rho_{0011}) = \rho_{0110} & \\
\end{array}
\]

 Note that  $\phi(x_1) = \phi(\rho_{1000} + \rho_{1010}) = \rho_{0001} + \rho_{0101} = y_4$, and $\phi(x_2) = \phi(0) = 0 = y_1$. By similar calculations, we have $\phi(x_3)=y_2$, and $\phi(x_4) = y_3$.  Thus, $\phi$ is a neural ring homomorphism;  in fact, since $\phi$ is a ring isomorphism and its inverse is a neural ring homomorphism sending $\phi^{-1}(y_1)=0=x_2$, $\phi^{-1}(y_2)=x_3$, $\phi^{-1}(y_3)=x_4$, and $\phi^{-1}(y_4)=x_1$, $\phi$ is a neural ring isomorphism..

\item Let $\D = \{000,110\}$ and $\C = \{00, 01,10\}$. Define $\phi:R_\D\rightarrow R_\C$ by $\phi(\rho_{000}) = \rho_{00} + \rho_{10}$ ad  $\phi(\rho_{110}) = \rho_{01}$. In $R_\D$, $x_1=x_2 = \rho_{110}$, and $x_3=0$. In $R_\C$, we have $y_1=\rho_{10}$ and $y_2 = \rho_{01}$. Under this map, we find $\phi(x_1) = \phi(x_2) = y_1$ and $\phi(x_3) = 0$, so $\phi$ is a neural ring homomorphism. However, it is not a neural ring isomorphism, as it is not a ring isomorphism.

\item Let $\D = \{00, 10\}$ and $\C = \{00, 10, 01\}$. Define the ring homomorphism $\phi:R_\D\rightarrow R_\C$ as follows: $\phi(\rho_{00}) = \rho_{00}$, $\phi(\rho_{10}) = \rho_{10} + \rho_{01}$. In $R_\D$, $x_1=\rho_{10}$. However, $\phi(x_1) = \rho_{10}+\rho_{01}$, which is not equal to either $y_1=\rho_{10}$, $y_2 = \rho_{01}$, $1=\rho_{00}+\rho_{10} + \rho_{01}$, or $0$. Thus, $\phi$ is not a neural ring homomorphism.

\end{enumerate} 

\end{example}

It is straightforward to see that the composition of neural ring homomorphisms is again a neural ring homomorphism.

\begin{lemma} If $\phi:R_\D\rightarrow R_\C$ and $\psi:R_\E\rightarrow R_\D$ are neural ring homomorphisms, then their composition $\phi\circ \psi$ is also a neural ring homomorphism. If $\phi$ and $\psi$ are both neural ring isomorphisms, then their composition $\phi\circ \psi$ is also a neural ring isomorphism.
\end{lemma}



As we have seen in Example \ref{ex:nrhoms}, both permutations and appending a trivial neuron correspond to neural ring isomorphisms. The following theorem introduces three other types of elementary code maps, which yield neural ring homomorphisms. All of these code maps are meaningful in a neural context, and preserve the behavior of individual neurons.  And, as seen in Theorem~\ref{thm:mainthm}, it turns out that {\it all} neural ring homomorphisms correspond to code maps that are compositions of these five elementary types of maps. The proof is given in Section \ref{sec:nrhpfs}. 

\begin{theorem}\label{thm:mainthm} A map $\phi:R_\D\rightarrow R_\C$ is a neural ring homomorphism if and only if $q_\phi$ is a composition of the following elementary code maps:
\begin{enumerate}
\item Permutation 
\item Adding a trivial neuron (or deleting a trivial neuron)
\item Duplication of a neuron (or deleting a neuron that is a duplicate of another)
\item Neuron projection (deleting a not necessarily trivial neuron)
\item Inclusion (of one code into another)
\end{enumerate}
Moreover, $\phi$ is a neural ring isomorphism if and only if $q_\phi$ is a composition of maps (1)-(3).
\end{theorem}

The ability to decompose any `nice' code map into a composition of these five elementary maps has immediate consequences for answering questions about neural codes. For example, one of the questions that motivated the definition of the neural ring and neural ideal was that of determining which neural codes are {\it convex}. In Section \ref{sec:convexity}, we look at how each of these maps affect convexity.

The following example provides a sense of what these different operations mean.

\begin{example}\label{ex:fivemaps}  In Figure \ref{fig:codes} we show a code $\C$, and the resulting codes $\C_1,\ldots,\C_5$ after applying the following elementary code maps: 
\begin{enumerate}
\item the cyclic permutation (1234) ($\C_1$), 
\item adding a trivial always-on neuron ($\C_2$), 
\item duplication of neuron 4 ($\C_3$),
\item   deleting neuron 4 (projecting onto neurons 1-3) ($\C_4$) 
\item an inclusion map into a larger code ($\C_5$). 
\end{enumerate}
The effects of these code maps on a realization of $\C$ are  shown on the left of Figure \ref{fig:codes}. The succeeding columns in the table on the right give the image of $\C$ under each of the five code maps. \\


\begin{figure}[!h]
\centering
\begin{subfigure}{.6\textwidth}
  \centering
  \includegraphics[width=3.2in]{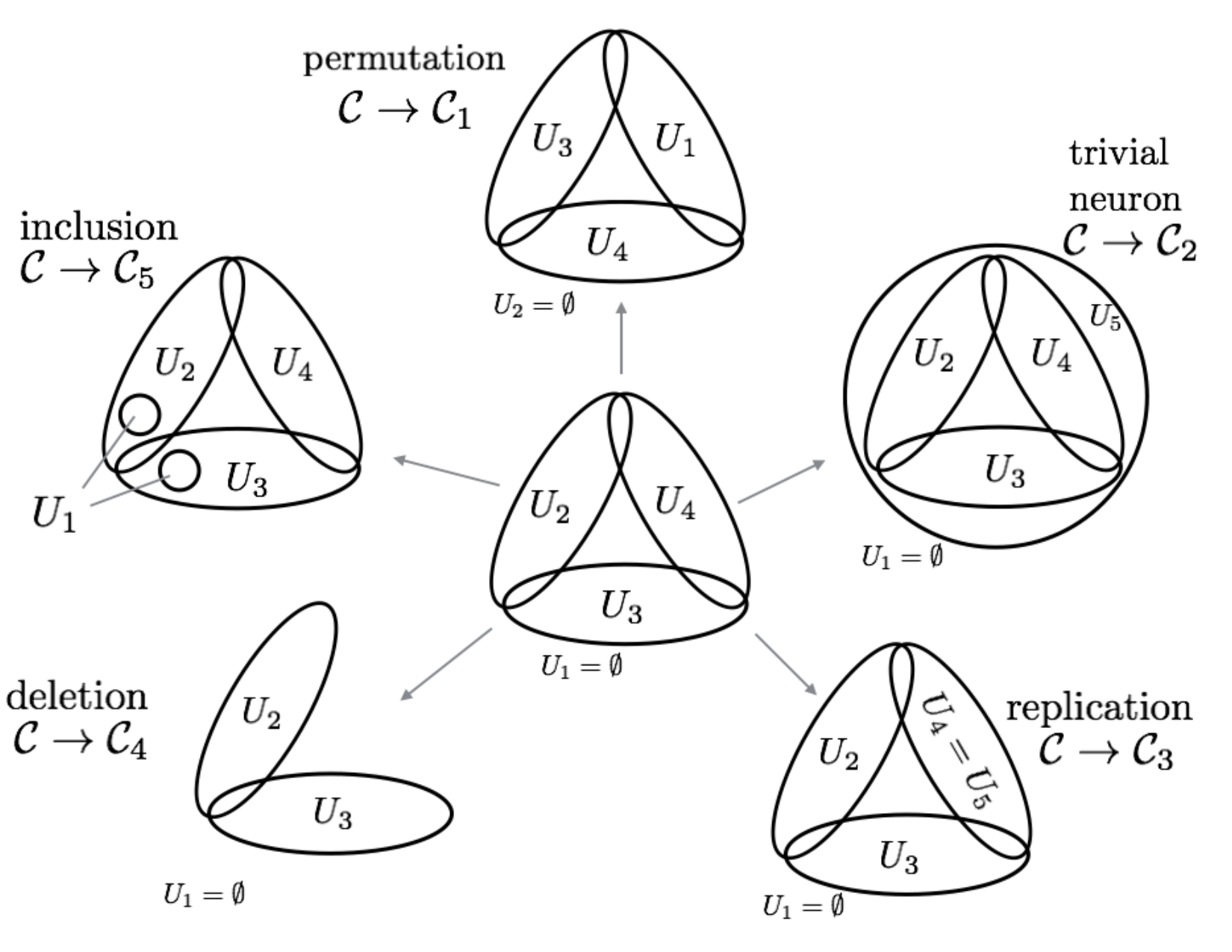}
\end{subfigure}%
\begin{subfigure}{.4\textwidth}
  \centering
  \includegraphics[width=2.8in]{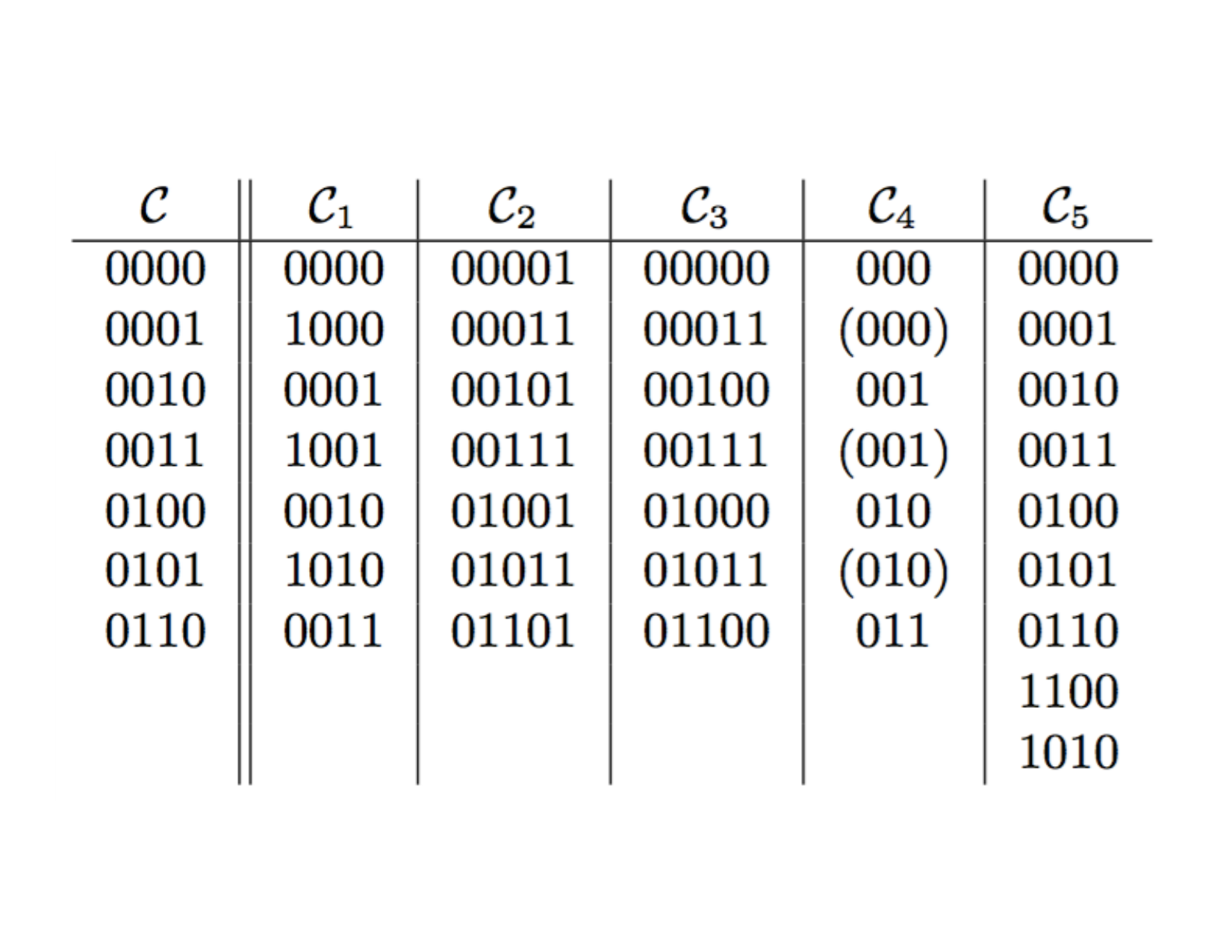}
\end{subfigure}
\caption{A code $\C$ and its image under five elementary code maps. (Left) The effect of each codeword on a realization of $\C$. (Right) A table showing how each codeword of $\C$ is transformed by each map. In each case, the code map sends a codeword $c\in \C$ to the codeword in its row.}
\label{fig:codes}
\end{figure}

\end{example}

\subsection{Proof of Theorem \ref{thm:mainthm} }\label{sec:nrhpfs}

To prove Theorem \ref{thm:mainthm}, we will first focus on the structure of neural ring homomorphisms.   As neural ring homomorphisms strictly control the possible images of variables, they can be described succinctly by an index `key' vector that captures the information necessary to determine the map. Since the index for the first variable will use the symbol `1', we will where necessary denote the multiplicative identity $1$ of the ring with the symbol $u$ to distinguish the two. Throughout, we will use the notation $c_i$ to indicate the $i$th component of a codeword $c$. 

\begin{definition} Let $\phi:R_\D\rightarrow R_\C$ be a neural ring homomorphism, where $\C$ and $\D$ are codes on $n$ and $m$ neurons, respectively. The {\it key vector} of $\phi$ is the vector $V\in \{1,...,n, 0, u\}^m$ such that $$V_j = \left\{ \begin{array}{ll} i & \text{ if }\phi(x_j) = y_i\\ 0 & \text{ if } \phi(x_j) = 0\\ u & \text{ if } \phi(x_j) = 1\end{array}\right..$$
\end{definition}

This key vector completely describes a neural ring homomorphism, since once the image of each variable is determined the rest of the homomorphism is given by the usual properties of homomorphism.  In cases where we have $y_i=y_k$ for some $i,k$, then only one representative of the equivalence class need appear in $V$.  

Because of the close correspondence of code maps and ring homomorphisms, the key vector also completely determines the associated code map. The following lemma gives the explicit relationship.

\begin{lemma} Let $\phi:R_\D\rightarrow R_\C$ be a neural ring homomorphism with key vector $V$. Then the corresponding code map $q_\phi:\C\rightarrow \D$ is given by $q_\phi(c) = d$, where $d_j = \left\{\begin{array}{ll} c_i & \text{ if }V_j=i\\ 0 &\text{ if } V_j =0\\ 1 & \text{ if } V_j=u\end{array}\right.$.
\end{lemma}


Furthermore, any code map that aligns with a key vector must be associated to a neural ring homomorphism.

\begin{lemma}
Let $\C$ and $\D$ be codes on $n$ and $m$ neurons, respectively. Suppose $q:\C\rightarrow \D$ is a code map and $V \in \{1,...,n,0,u\}^m$ such that $q$ is described by $V$; that is, for all $c\in \C$, $q(c)=d$ where $d_j = \left\{\begin{array}{ll} c_i & \text{ if }V_j=i\\ 0 &\text{ if } V_j =0\\ 1 & \text{ if } V_j=u\end{array}\right.$. Then the associated ring homomorphism $\phi_q$ is a neural ring homomorphism with key vector $V$.
\end{lemma}

\begin{proof} Let $q$ be as described above, and $\phi_q$ the associated ring homomorphism. We will show that for $j\in[m]$, we have $\phi_q(x_j) = \left\{ \begin{array}{ll} x_i & \text{ if }V_j = i\\ 0 & \text{ if } V_j = 0\\ 1 & \text{ if }V_j= 1\end{array}\right.$ and thus that $\phi_q$ is a neural ring homomorphism with key vector $V$. We will examine the three options for $V_j$ separately.

First, suppose $V_j = i \in [n]$. Then for all $c\in \C$, we have $q(c)_j = c_i$, and thus that $x_j(q(c)) = c_i$. Hence, $x_j\circ q = y_i$, since both functions act the same on all codewords $c\in \C$. But by definition of the pullback map, $\phi(x_j) = x_j\circ q$, so $\phi(x_j) = y_i$.
Next, suppose $V_j=0$. Then for all $c\in C$ we have $q(c)_j=0$ and thus that $x_j(q(c)) = 0$. Hence, $x_j\circ q = 0$, since both functions act the same on all codewords $c\in \C$. But by definition of the pullback map, $\phi(x_j) = x_j\circ q$, so $\phi(x_j) = 0$ in this case. 

Finally, suppose $V_j=u$. Then for all $c\in C$ we have $q(c)_j=1$ and thus that $x_j(q(c)) = 1$. Hence, $x_j\circ q = 1$, since both functions act the same on all codewords $c\in \C$. But by definition of the pullback map, $\phi(x_j) = x_j\circ q$, so $\phi(x_j) = 1$ in this case. 
\end{proof}

\begin{remark} It is important to note here that the key vector for a particular code map may not be unique. In Example \ref{ex:nrhoms} (1), we saw an example of a permutation code map that could be described by key vector $(4,1,2,3)$. However, as $\phi(x_2) = y_1=0$, we could replace this key vector with $(4,0,2,3)$ and describe the same homomorphism.  In cases like these, either choice is valid. However, this does not mean that the corresponding homomorphism is not unique.
\end{remark}

Now that we have shown that neural ring homomorphisms (and their corresponding code maps) are precisely those determined by key vectors, we need only show the following:

\begin{itemize}
\item  All five code maps listed have key vectors.
\item  Any code map with a key vector can be written as a composition of these five maps.
\item  The first three code maps correspond precisely to neural ring isomorphisms.
\end{itemize}

To see that all five elementary code maps in Theorem~\ref{thm:mainthm} have key vectors, we simply exhibit the key vector for each. In the process, we will show that the first three maps correspond to neural ring isomorphisms.
To describe these code maps, we will consider an arbitrary word $c\in \C$, written as $c=c_1c_2\cdots c_n$, and describe the image $q(c) \in \D$. Throughout, $\C$ is a code on $n$ neurons and $\D$ is a code on $m$ neurons.

\begin{enumerate}
\item Permutation maps: If the code map $q:\C\rightarrow \D$ is a permutation map, then $n=m$, $q(\C) = \D$, and each codeword is permuted by the same permutation $\sigma$. That is, for each $c\in \C$, we know $q(c) = c_{\sigma(1)}c_{\sigma(2)}\cdots c_{\sigma(n)}$. In this case, the key vector is given by $V_j = \sigma(j)$. As permutation yields a bijection on codewords, and the inverse permutation also has a key vector, permutation maps correspond to neural ring isomorphisms.

\item Adding a trivial neuron to the end of each codeword: in this case, $m=n+1$ and $q(\C) = \D$. Consider first the case of adding a trivial neuron that is never firing  to the end of each codeword, so that $q:\C\rightarrow \D$ is described by $q(c) = c_1c_2\cdots c_n0$, and $q(\C) = \D$. The key vector is given by $V_j = j$ for $j\in [n]$ and $V_{n+1} = 0$.  Similarly, if we add a neuron that is always firing, so $q(c) = c_1\cdots c_n 1,$ then $V_j=j$ for $j\in [n]$ and $V_{n+1} = u$. Such a map will be a bijection; moreover, the reverse map (where we delete the trivial neuron at the end of each word) also has a key vector: $W_i=i$ for all $i\in[n]$. Thus, this map (and its inverse) correspond to neural ring isomorphisms.

\item Adding a duplicate neuron to the end of each codeword: in this case, $m=n+1$ and $q(\C) =\D$. If the new neuron $n+1$ duplicates neuron $i$, then the code map is given by $q(c) = c_1\cdots c_n c_i$, and the key vector is given by $V_j=j$ for $j\in [n]$ and $V_{n+1} = i$. Such a map will be a bijection on codewords, and moreover, the inverse code map corresponds to the key vector where $W_i=i$ for all $i\in[n]$, and so its inverse corresponds to a neural ring homomorphism. Thus, this map and its inverse correspond to neural ring isomorphisms.

\item Projection (deleting the last neuron): in this case, $m=n-1$ and $q(\C)=\D$. The code map is given by $q(c) = c_1\cdots c_{n-1}$ and we have the key vector $V_j=j$ for $j\in [n-1]$. 

This map corresponds to a neural ring isomorphism precisely when the deleted neuron is either trivial, or a duplicate of another neuron. If neither of these hold, then there are two possibilities: either the code map is not a bijection, in which case the corresponding ring homomorphism is not an isomorphism, or the code map is a bijection, but the inverse will not be a neural ring homomorphism, as $\phi^{-1}(y_{n+1})\notin\{x_1,...,x_m, 0, 1\}$.


\item Inclusion: in this case, $m=n$, and we have $q(c) = c$ for all $c\in \C$. However, we do not demand $q(\C)=\D$. Since in this case each codeword maps to itself, we can use the key vector $V_j=j$ for $j\in [n]$.

\end{enumerate}

Finally, we prove the main substance of Theorem \ref{thm:mainthm}, which is that any code map corresponding to a neural ring homomorphism can be written as a composition of the five listed maps, and furthermore that any isomorphism requires only the first three.

\begin{proof}[Proof of Theorem \ref{thm:mainthm}]
Let $\C$ and $\D$ be codes on $n$ and $m$ neurons, respectively, and let $\phi:R_\D\rightarrow R_\C$ be a neural ring homomorphism with corresponding code map $q$. Our overall steps will be as follows: 

\begin{enumerate}
\item Append the image $q(c)$ to the end of each codeword $c$ using a series of maps that duplicate neurons or add trivial neurons, as necessary. 
\item Use a permutation map to move the image codeword $q(c)$ to the beginning, and the original codeword $c$ to the end.
\item Use a series of projection maps to delete the codeword $c$ from the end, resulting in only $q(c)$.
\item Use an inclusion map to include $q(\C)$ into $\D$ if $ q(\C)\subsetneq \D$.
\end{enumerate}


First we define some intermediate codes: let $\C_0=\C$.  For $j=1,...,m$, let $$\C_j = \{ (c_1,...,c_n,d_1,...,d_j) \mid c\in \C, d=q(c)\}\subset\{0,1\}^{n+j}.$$ For $i=1,...,n$, let $$\C_{m+i} = \{ (d_1,...,d_m,c_1,...,c_{n-i+1}) \mid c\in \C, d=q(c)\}\subset\{0,1\}^{m+n-i+1}.$$ Finally, define $\C_{m+n+1} = q(\C)\subset \D$.

Now, for $j=1,...,m$, let the code map $q_j:\C_{j-1}\rightarrow \C_j$ be defined for $v=(c_1,...,c_n,d_1,...,d_{j-1})\in \C_{j-1}$ by  $q_j(v) = (c_1,...,c_n, d_1,...,d_j)\in \C_j$.   
Thus, if $v=(c_1,...,c_n,d_1,...,d_{j-1})$ with $d=q(c)$, then $q_j(v) = (c_1,...,c_n, d_1,...,d_j)$. Since $\phi$ is a neural ring homomorphism, the associated code map $q$ has a corresponding key vector $V$; note that $q_j$ is described by the key vector $W^j=(1,...,n+j-1, V_j)$, so $q_j$ is either repeating a neuron, or adding a trivial neuron, depending on whether $V_j = i$, or one of $u,0$.

Next, take the permutation map given by $\sigma = (n+1,...,n+m,1,...,n)$, so all the newly added neurons are at the beginning and all the originals are at the end.  That is,  define $q_\sigma:\C_m\rightarrow \C_{m+1}$  so if $v=(v_1,...,v_{n+m}),$ then $q_\sigma(v) = (v_{n+1},...,v_{n+m},v_1,...,v_n)$. 

We then delete the neurons $m+1$ through $n+m$ one by one in $n$ code maps.   That is, for $i=1,...,n$ define $q_{m+i}:\C_{m+i}\rightarrow \C_{m+i+1}$ by $q_{m+i}(v) = (v_1,...,v_{m+n-i})$.

Lastly, if $q(\C)\subsetneq \D$, then add one last inclusion code map $q_a:q(\C)\hookrightarrow \D$ to add the remaining codewords of $\D$.

Thus, given $c=(c_1,...,c_n)$ with $q(c) = d =(d_1,...,d_m)$, the first $m$ steps give us $q_m\circ\cdots\circ q_1(c) =  (c_1,...,c_n,d_1,...,d_m) = x$. The permutation then gives us $q_\sigma(x) = (d_1,...,d_m,c_1,...,c_n) = y$, and then we compose $q_{m+n}\circ\cdots\circ q_{m+1}(y) = (d_1,...,d_n) = d = q(c)$.   Finally, if $q(\C)\subsetneq \D$, we do our inclusion map, but as $q_a(d) = d$, the overall composition is a map $\C\rightarrow \D$ taking $c$ to $q_\phi(c)=d$ as desired. At each step, the map we use is from our approved list.

Finally, to show that code maps corresponding to neural ring isomorphisms only use maps (1)-(3), note that in the case that $\phi$ is a neural ring isomorphism, it is in particular an isomorphism, so the corresponding code map $q_\phi$ is a bijection and thus $q_\phi(\C) = \D$; no inclusion map is necessary in the last step of the process described above. We have also noted above that projection maps correspond to neural ring isomorphisms only when the deleted neuron is either trivial or a duplicate of another. Thus, only maps (1)-(3) are necessary to describe all neural ring isomorphisms. \end{proof}

\section{Neural ring homomorphisms and convexity}\label{sec:convexity}

One of the questions which has motivated a deeper understanding of the neural ring is that of determining which neural codes are convex.

\begin{definition} A neural code $\C$ on $n$ neurons is {\it convex in dimension $d$} if there is a collection $\mathcal{U} = \{U_1,...,U_n\}$ of convex open sets in $\R^d$ such that $\C = \{c\in \{0,1\}^n \,|\, \left(\bigcap_{c_i=1} U_i \right)\backslash \left(\bigcup_{c_j=0} U_j \right) \neq \emptyset\}$. If additionally no such collection exists in $\R^{d-1}$, then $d$ is known as the minimal embedding dimension of the code, denoted $d(\C)$. If there is no dimension $d$ where $\C$ is convex, then $\C$ is a {\it non-convex} code; in this case we use the convention $d(\C)=\infty$.
\end{definition}

\begin{example} In Example \ref{ex:fivemaps} (illustrated in Figure \ref{fig:codes}), we showed the results of applying five elementary code maps to the code $\C$. In that case, code $\C$ and its images $\C_1-\C_3$ are convex codes of dimension 2 and code $\C_4$ is convex of dimension 1. On the other hand,  $\C_5$ cannot be realized with convex sets in any dimension, as $U_1\cap U_2$ and $U_1\cap U_3$ necessarily form a disconnection of $U_1$.
\end{example}

In general, determining whether or not a code has a convex realization is a difficult question. Some partial results exist that give guarantees of convexity or of non-convexity, or that bound the embedding dimension (see for example \cite{neuralring, MRCalgsigs, MRCpaper, ShiuREU, ChadVlad, OpenClosedCodes}). One way to extend such results is to show that once a code is known to have certain properties related to convexity, we can generate other codes from it via code maps that would preserve these properties. The following theorem shows that if a {\it surjective} code map is `nice' (i.e., has a corresponding neural ring homomorphism), then it preserves convexity and the embedding dimension can only decrease.

\begin{theorem}\label{thm:convexity}
Let $\C$ be a code containing the all-zeros codeword and $q:\C\rightarrow\D$  a surjective code map corresponding to a neural ring homomorphism. Then if $\C$ is convex, $\D$ is also convex with $d(\D)\leq d(\C)$; if $\D$ is not convex, then $\C$ is not convex.
\end{theorem}

\begin{corollary}\label{cor:convexity}
Let  $\C$ be a code containing the all zeros codeword, and $q:\C\rightarrow \D$  a code map corresponding to a neural ring isomorphism. Then $\C$ and $\D$ are either both convex, with $d(\C) = d(\D)$, or both not convex. 
\end{corollary}

The proof of this theorem and its corollary relies on Theorem \ref{thm:mainthm}, and in particular uses the decomposition of these code maps to reduce the convexity question to code maps of just the five elementary types. As Theorem \ref{thm:convexity} addresses all neural ring homomorphisms that correspond to surjective code maps, it covers any such maps that are composed of permutation, duplication, deletion, or adding on trivial neurons.

 Note that the theorem would not necessarily hold for arbitrary surjective code maps that do {\it not} correspond to a neural ring homomorphism.  It would be a simple matter to create a bijection between a non-convex and a convex code with the same number of codewords, which would correspond to a ring isomorphism, but would not preserve convexity. 

The only non-surjective elementary code map corresponding to a neural ring homomorphism is inclusion, and this theorem cannot generally be extended to inclusion maps. Because the inclusion map can be used to include codes into arbitrary larger ones of the same length, it is possible to change convexity and dimension in arbitrary ways. The following examples show how to include convex codes in non-convex codes and vice versa, as well as ways to change the realization dimension by an arbitrary amount. 

\begin{example} Note that in Examples (1) and (3) below, we rely on results and constructions detailed in other work, especially \cite{MRCpaper}. 
\begin{enumerate}

\item Non-convex codes can be included into convex codes. If $\C$ is any non-convex code, then we can include $\C$ into the larger code $\Delta(\C)$, the simplicial complex of $\C$, which is necessarily convex. For more details, see for example \cite{MRCpaper, ChadVlad}.

\item Convex codes (of arbitrary dimension) can also be included into non-convex codes. Let $\C_1$ be a convex code on $n$ neurons, and $\C_2$ a non-convex code on $m$ neurons. Define the code $\C$ to be the code $\C_1$ with $m$ always-zero neurons appended to the end of each codeword; note that $\C$ is still convex, by the arguments above. Similarly, define the code $\C'$ to be the code $\C_2$ with $n$ always-zero neurons appended to the beginning of each codeword. The code $\C'$ is still not convex, again by the previous theorem. Define the code $\D$ to be the code $\C\cup \C'$, and note that as the first $n$ neurons never interact with the last $m$, this code is not convex, but we can include $\C$ into $\D$. 

\item Even when we include one convex code into another convex code, examples exist that change the dimension arbitrarily far in either direction. Let $n>2$ be arbitrarily large. Then, $\C = \{0,1\}^n \backslash \{11...1\}$ (the code on $n$ neurons with the all-ones codeword removed) is convex of dimension $n-1$. We can include $\C$ into the code $\D = \{0,1\}^n$, which is convex of dimension 2, reducing the dimension by $n-3$. We can also increase the dimension as far as we wish,  for example by including the simple one-dimensional code $\{00...0, 10...0\}$ into the code $\C = \{0,1\}^n\backslash \{11...1\}$, which was convex of dimension $n-1$, increasing the dimension by $n-2$. For a further discussion of the dimension and convexity of these codes, see \cite{MRCpaper}.
\end{enumerate}

\end{example}

We now give the proof of Theorem \ref{thm:convexity} and Corollary \ref{cor:convexity}.

\begin{proof}[Proof of Theorem \ref{thm:convexity}] 
If $\C$ is a surjective code map corresponding to a neural ring homomorphism, then it can be written as a composition of just the first four maps described by Theorem \ref{thm:mainthm}, following the process outlined in the proof.  Thus, to prove both theorem and corollary, it suffices to show that if a code $\C'$ is obtained from $\C$ via a projection map, then $d(\C')\leq d(\C)$, and that if $\C'$ is obtained from $\C$ via one of the first three maps, then $d(\C') = d(\C)$. In general, if a convex realization of $\C$ can be transformed, in the same dimension, into a convex realization for $\C'$, then we have shown both that $\C'$ is convex whenever $\C$ is, and also that $d(\C')\leq d(\C)$.

{\bf Permutation maps:} If $\C'$ is obtained from $\C$ via a permutation map, then any convex realization $\U$ of $\C$ is also a realization of $\C'$ by permuting the labels on the sets accordingly. Likewise, any realization $\U'$ of $\C'$ is a realization of $\C$, by permuting the labels inversely. Thus, $\C$ is convex if and only if $\C'$ is also convex, and in addition $d(\C') = d(\C)$.

{\bf Adding/deleting a trivial neuron:} If $\C'$ is obtained from $\C$ by adding a trivial always-zero neuron $n+1$, then a realization $\U$ of $\C$ can be transformed into a realization of $\C'$ by adding a set $U_{n+1}=\emptyset$. Likewise, a convex realization $\U'$ of $\C'$ can be transformed into a convex realization of $\C$ by removing the set $U_{n+1}$, which is necessarily empty as neuron $n+1$ never fires. For the second case, if $\C'$ is obtained from $\C$ by adding a trivial always-one neuron $n+1$, then we can transform a realization $\U$ of $\C$ into a realization of $\C'$ by adding the set $U_{n+1}$ that is made up of the entire ambient space $X$ in which the realization is set. This ambient space may be assumed to be convex, as $\C$ contains the all-zeros codeword.  Likewise, a realization of $\C'$ can be transformed to that for $\C$ by removing the set $U_{n+1}$. Thus, for such maps, $\C$ is convex if and only if $\C'$ is convex and, in addition, $d(\C)=d(\C')$. 

{\bf Adding/deleting a duplicate neuron:} if $\C'$ is obtained from $\C$ by duplicating neuron $i$ to a new neuron $n+1$, then any convex realization $\U$ of $\C$ can be transformed into a convex realization of $\C'$ by adding a set $U_{n+1}$ that is identical to the set $U_i$. Likewise, any convex realization $\U'$ of $\C'$ can also realize $\C$, by removing the set $U_{n+1}$ that must be identical to $U_i$.  Since $\C$ is obtained from $\C'$ by deleting a duplicate neuron, this argument also works for deleting a duplicate neuron. Hence, under such maps,  $\C$ is convex if and only if $\C'$ is convex, and in addition,  $d(\C) = d(\C')$.

{\bf Projection (deletion) maps:} if $\C'$ is obtained from $\C$ by deleting neuron $n$, then a convex realization $\U$ of $\C$ can be transformed into a realization of $\C'$ by removing the set $U_n$ from the realization. Thus, if $\C$ is convex, then $\C'$ must also be convex, and in particular, $d(\C')\leq d(\C)$. 
\end{proof}

\section*{Acknowledgements}

CC was supported by NIH R01 EB022862 and NSF DMS-1516881, NSF DMS-1225666/1537228, and an Alfred P. Sloan Research Fellowship; NY was supported by the Clare Boothe Luce Foundation.  The authors thank Katie Morrison, Mohamed Omar, and  R. Amzi Jeffs for many helpful discussions.


\bibliographystyle{unsrt}
\bibliography{maps-between-codes}

\begin{thebibliography}{10}

\bibitem{neuralring}
C.~Curto, V.~Itskov, A.~Veliz-Cuba, and N.~Youngs.
\newblock The neural ring : an algebraic tool for analyzing the intrinsic
  structure of neural codes.
\newblock {\em Bulletin of Mathematical Biology}, 75(9), 2013.

\bibitem{Bialek2008}
L.~Osborne, S.~Palmer, S.~Lisberger, and W.~Bialek.
\newblock The neural basis for combinatorial coding in a cortical population
  response.
\newblock {\em Journal of Neuroscience}, 28(50):13522--13531, 2008.

\bibitem{BialekBerry}
E.~Schneidman, J.~Puchalla, R.~Segev, R.~Harris, W.~Bialek, and M.~Berry~II.
\newblock Synergy from silence in a combinatorial neural code.
\newblock {\em J. Neuroscience}, 31(44):15732--15741, 2011.

\bibitem{Walker2013}
C.~Curto, V.~Itskov, K.~Morrison, Z.~Roth, and J.~Walker.
\newblock Combinatorial neural codes from a mathematical coding theory
  perspective.
\newblock {\em Neural computation}, 25(7):1891--1925, 2013.

\bibitem{gap}
C.~Curto and V.~Itskov.
\newblock Cell groups reveal structure of stimulus space.
\newblock {\em PLoS Computational Biology}, 4(10), 2008.

\bibitem{CurtoBulletin}
C.~Curto.
\newblock What can topology tells us about the neural code?
\newblock {\em Bulletin of the AMS}, 54(1):63--78, 2017.

\bibitem{OKeefeDostrovsky}
J.~O'Keefe and J.~Dostrovsky.
\newblock The hippocampus as a spatial map. preliminary evidence from unit
  activity in the freely-moving rat.
\newblock {\em Brain Research}, 34(1):171--175, 1971.

\bibitem{hubelwiesel}
D.~H. Hubel and T.N. Wiesel.
\newblock Receptive fields of single neurons in the cat's striate cortex.
\newblock {\em Journal of Physiology}, 148(3):574--591, 1959.

\bibitem{YartsevUlanovsky}
M.~Yartsev and N.~Ulanovsky.
\newblock Representation of three-dimensional space in the hippocampus of
  flying bats.
\newblock {\em Science}, 340(6130):367--372, 2013.

\bibitem{helly-review}
L.~Danzer, B.~Gr{\"u}nbaum, and V.~Klee.
\newblock Helly's theorem and its relatives.
\newblock In {\em Proc. {S}ympos. {P}ure {M}ath., {V}ol. {VII}}, pages
  101--180. Amer. Math. Soc., Providence, R.I., 1963.

\bibitem{MRCalgsigs}
C.~Curto, E.~Gross, J.~Jeffries, K.~Morrison, Z.~Rosen, A.~Shiu, and N.~Youngs.
\newblock Algebraic signatures of convex and non-convex codes.
\newblock {\em Journal of Pure and Applied Algebra}, 2018.

\bibitem{MRCpaper}
C.~Curto, E.~Gross, J.~Jeffries, K.~Morrison, M.~Omar, Z.~Rosen, A.~Shiu, and
  N.~Youngs.
\newblock What makes a neural code convex?
\newblock {\em SIAM Journal on Applied Algebra and Geometry}, 1(1):222--238,
  2017.

\bibitem{cox-little-oshea}
D.~Cox, J.~Little, and D.~O'Shea.
\newblock {\em Ideals, varieties, and algorithms}.
\newblock Undergraduate Texts in Mathematics. Springer-Verlag, New York, second
  edition, 1997.
\newblock An introduction to computational algebraic geometry and commutative
  algebra.

\bibitem{MoAmzi}
R.~A. Jeffs, M.~Omar, and N.~Youngs.
\newblock Neural ideal preserving homomorphisms.
\newblock {\em J. Pure and Applied Algebra}, 222(11):3470--3482, 2018.

\bibitem{ShiuREU}
C.~Lienkaemper, A.~Shiu, and Z.~Woodstock.
\newblock Obstructions to convexity in neural codes.
\newblock {\em Adv. Appl. Math.}, 85:31--59, 2017.

\bibitem{ChadVlad}
C.~Giusti and V.~Itskov.
\newblock A no-go theorem for one-layer feedforward networks.
\newblock {\em Neural Computation}, 26(11):2527--2540, 2014.

\bibitem{OpenClosedCodes}
J.~Cruz, C.~Giusti, V.~Itskov, and W.~Kronholm.
\newblock On open and closed convex codes.
\newblock {\em Discrete and Computational Geometry}, 61(2):247--270, 2019.

\end{thebibliography}

\end{document}